\newtheorem{lemma}{Lemma}[section]
\newtheorem{theorem}[lemma]{Theorem}
\theoremstyle{definition}
\theoremstyle{remark}
\providecommand{\keywords}[1]{\textbf{\text{Keywords:}} #1}
\begin{document}


\title{An $H_{n/2}$ Upper Bound on the Price of Stability of Undirected Network
Design Games   \thanks{This paper will appear in the Proceedings of the 39th International Symposium on Mathematical Foundations of Computer Science, MFCS 2014, Budapest, August 25-29}
}
%
\author[1]{Akaki Mamageishvili}
\author[1]{Mat\'u\v{s} Mihal\'ak}
\author[2]{Simone Montemezzani}
\affil[1]{Department of Computer Science, ETH Zurich, Switzerland}
\affil[2]{Department of Mathematics, ETH Zurich, Switzerland}
\date{June, 2014}
%

\maketitle

\begin{abstract}
In the network design game with $n$ players, every player chooses a path in an
edge-weighted graph to connect her pair of terminals, sharing costs of the edges
on her path with all other players fairly.
We study the price of stability of the game, i.e., the ratio of the social costs
of a best Nash equilibrium (with respect to the social cost) and of an optimal
play.
It has been shown that the price of stability of any network design game is at
most $H_n$, the $n$-th harmonic number. 
This bound is tight for directed graphs.
For undirected graphs, the situation is dramatically different, and tight bounds
are not known. 
It has only recently been shown that the price of stability is at
most $H_n \left(1-\frac{1}{\Theta(n^4)} \right)$, while the worst-case known
example has price of stability around 2.25. 
In this paper we improve the upper bound considerably by showing that the price
of stability is at most $H_{n/2} + \epsilon$ for any $\epsilon$ starting from
some suitable $n \geq n(\epsilon)$.
%
%
\end{abstract}

\keywords{Network design game, Nash equilibrium, Price of Stability}

%
\section{Introduction}

Network design game was introduced by Anshelevich et al.\,\cite{original}
together with the notion of price of stability (PoS), as a formal model to study
and quantify the strategic behavior of non-cooperative agents in designing
communication networks.
Network design game with $n$ players is given by an edge-weighted graph $G$
(where $n$ does not stand for the number of vertices), and by a collection
of $n$ terminal (source-target) pairs $\{s_i,t_i\}$, $i=1,\ldots,n$.
In this game, every player $i$ connects its terminals $s_i$ and $t_i$ by an
$s_i$-$t_i$ path $P_i$, and pays for each edge $e$ on the path a fair share of
its cost (i.e., all players using the edge pay the same amount totalling to the
cost of the edge).
A Nash equilibrium of the game is an outcome $(P_1,\dots,P_n)$ in which no
player $i$ can pay less by changing $P_i$ to a different path $P_i'$.

Nash equilibria of the network design game can be quite different from an
optimal outcome that could be created by a central authority.
To quantify the difference in quality of equilibria and optima, one compares the
total cost of a Nash equilibrium to the cost of an optimum (with respect to the
total cost).
Taking the worst-case approach, one arrives at the \emph{price of anarchy},
which is the ratio of the maximum cost of any Nash equilibrium to the cost of an
optimum. 
Price of anarchy of network design games can be as high as $n$ (but not higher)
\cite{original}.
Taking the slightly less pessimistic approach leads to the notion of the
\emph{price of stability}, which is the ratio of the smallest cost of any Nash
equilibrium to the cost of an optimum.
The motivation behind this is that often a central authority exists, but cannot
force the players into actions they do not like. Instead, a central authority can
suggest to the players actions that correspond to a best Nash equilibria.
Then, no player wants to deviate from the action suggested to her, and the
overall cost of the outcome can be lowered (when compared to the worst case Nash
equilibria).

Network design games belong to the broader class of congestion games for which a
function (called a \emph{potential function}) $\Phi (P_1,\ldots,P_n)$
exists, with the property that $\Phi(P_1\ldots,P_i,\ldots,P_n) -
\Phi(P_1,\ldots,P_i',\ldots,P_n)$ exactly reflects the changes of the cost of
any player $i$ switching from $P_i$ to $P_i'$.
This property implies that a collection of paths $(P_1,\ldots,P_n)$ minimizing
$\Phi$ necessarily needs to be a Nash equilibrium.
Up to an additive constant, every congestion game has a unique potential
function of a concrete form, which can be 
used to show that the price of stability of any network design game is at most
$H_n:=\sum_{i=1}^n \frac{1}{i}$, the $n$-th harmonic number, and this is tight
for directed graphs (i.e., there is a network design game for which the price
of stability is arbitrarily close to $H_n$) \cite{original}.

Obtaining tight bounds on the price of stability for undirected graphs turned
out to be much more difficult. The worst case known example is an involved
construction of a game by Bil\`o et al.\,\cite{lowerbounds} achieving in the
limit the price of stability of around 2.25.
While the general upper bound of $H_n$ applies also for undirected graphs, it
has not been known for a long time whether it can be any lower, until the
recent work of Disser et al.\,\cite{Matus} who showed that the price of
stability of any network design game with $n$ players is at most $H_{n} \cdot
\left(1 - \frac{1}{\Theta(n^4)} \right)$. 
Improved upper bounds have been obtained for special cases.
For the case where all terminals $t_i$ are the same, Li showed \cite{multicast}
that the price of stability is at most $O\left(\frac{\log{n}}{\log\log n}
\right)$ (note that $H_n$ is approximately $\ln n$).
If, additionally, every vertex of the graph is a source of a player, a series of
papers by Fiat et al.\,\cite{Fiat+etal/2006}, Lee and Ligett
\cite{Lee+Ligett/2013}, and Bil\`o et al.\,\cite{constant} showed that the price
of stability is in this case at most $O(\log \log n)$, $O(\log \log \log n)$,
and $O(1)$, respectively.
Fanelli et al.\,\cite{Fanelli+Leniowski+Monaco+Sankowski/2012} restrict the
graphs to be rings, and prove that the price of stability is at most $3/2$.
Further special cases concern the number of players. Interestingly, tight bounds
on price of stability are known only for $n=2$ (we do not consider the case $n=1$
as a game) \cite{original,23players}, while for already 3 players there are no
tight bounds; for the most recent results for the case $n=3$, see \cite{Matus}
and \cite{bilo2011bounds}. 

All obtained upper bounds on the price of stability use the potential function
in one way or another.
Our paper is not an exception in that aspect. Bounding the price of stability
translates effectively into bounding the cost of a best Nash equilibrium. A
common approach is to bound this cost by the cost of the potential function
minimizer $(P_1^\Phi,\ldots,P_n^\Phi) := \arg \min_{(P_1,\ldots,P_n)}
\Phi(P_1,\ldots,P_n)$, which is (as we argued above) also a Nash equilibrium.
Using just the inequality $\Phi(P_1^\Phi,\ldots,P_n^\Phi) \leq
\Phi(P_1^O,\ldots,P_n^O)$, where $(P_1^O,\ldots,P_n^O)$ is an optimal outcome
(minimizing the total cost of having all pairs of terminals connected), one
obtains the original upper bound $H_n$ on the price of stability \cite{original}. In \cite{Matus,23players} authors 
consider other inequalities obtained from the property that potential optimizer is also a Nash equilibrium to obtain improved upper bounds. 
In this paper, we consider $n$ different specifically chosen
strategy profiles $(P_1^i,\ldots,P_n^i)$, $i=1,\ldots,n$, in which players use
only edges of the optimum $(P_1^O,\ldots,P_n^O)$ and of the Nash equilibrium
$(P_1^\Phi,\ldots,P_n^\Phi)$. This idea is a generalization of the
approach used by Bil\`{o} and Bove \cite{bilo2011bounds} to prove an upper bound
of $286/175 \approx 1.634$ for Shapley network design games with $3$ players.
Clearly, the potential of each of the considered strategy profile is at least
the potential of $(P_1^\Phi,\ldots,P_n^\Phi)$. Summing all these $n$
inequalities and combining it with the original inequality
$\Phi(P_1^\Phi,\ldots,P_n^\Phi) \leq \Phi(P_1^O,\ldots,P_n^O)$ gives an
asymptotic upper bound of $H_{n/2}+\epsilon$ on the price of stability.
Our result thus shows that the price of stability is strictly lower than $H_n$
by an additive constant (namely, by $\log 2$).

Albeit the idea is simple, the analysis is not. It involves carefully chosen
strategy profiles for various possible topologies of the optimum solution. These
considerations can be of independent interest in further attempts to improve the
bounds on the price of stability of network design games.


\section{Preliminaries}

\emph{Shapley network design game} is a strategic game of $n$ players played on
an edge-weighted graph $G=(V,E)$ with non-negative edge costs $c_e$, $e\in E$. Each
player $i$, $i=1,\ldots,n$, has a \emph{source} node $s_i$ and a \emph{target}
node $t_i$. All $s_i$-$t_i$ paths form the set $\mathcal{P}_i$ of the
\emph{strategies} of player $i$. 
A vector $P=(P_1,\ldots,P_n)\in \mathcal{P}_1 \times \cdots \times
\mathcal{P}_n$ is called a \emph{strategy profile}.
Let $E(P):=\bigcup_{i=1}^{n} P_i$ be the set of all edges used in $P$.
The \emph{cost of player $i$} in a strategy profile $P$ is $\text{cost}_i(P) =
\sum_{e\in P_i} c_e/k_e(P)$, where $k_e(P)=\left\vert \{j| e\in P_j \}
\right\vert$ is the number of players using edge $e$ in $P$.
A strategy profile $N=(N_1,\ldots,N_n)$ is a \emph{Nash equilibrium} if no
player $i$ can unilaterally switch from her strategy $N_i$ to a different
strategy $N_i'\in \mathcal{P}_i$ and decrease her cost, i.e., 
$\text{cost}_i(N) \leq cost_i(N_1,\ldots,N_i',\ldots,N_n)$ for every $N_i'\in
\mathcal{P}_i$.

Shapley network design games are exact potential games. That is, there is a so
called \emph{potential function} $\Phi: \mathcal{P}_1 \times \cdots \times
\mathcal{P}_n \rightarrow \mathbb{R}$ such that, for every strategy profile $P$,
every player $i$, and every alternative strategy $P_i'$, $cost_i(P) -
cost_i(P_1,\ldots,P_i',\ldots,P_n) = \Phi(P) - \Phi(P_1,\ldots,P_i',\ldots,P_n)$. 
Up to an additive constant, the potential function is unique
\cite{Monderer+Shapley/1996}, and is defined as
\begin{equation*}
  \Phi(P) = \sum_{e\in E(P)} \sum_{i=1}^{k_e(P)} c_e/i = \sum_{e\in E(P)}
  H_{k_e(P)} \, c_e \enspace\text{.}
\end{equation*}
To simplify the notation (e.g., to avoid writing $H_{\lceil n/2\rceil}$), we
extend $H_k$ also for non-integer values of $k$ by setting $H(k) := \int_0^1
\frac{1 - x^k}{1 - x} dx$, which is an increasing function, and which agrees
with the (original) $k$-th harmonic number whenever $k$ is an integer.

The \emph{social cost} of a strategy profile $P$ is defined as the sum of the
player costs:
\begin{equation}
\text{cost}(P)=\sum_{i=1}^{n} \text{cost}_i (P) = \sum_{i=1}^{n}
\sum_{e\in P_i} c_e/k_e(P) = \sum_{e\in E(P)} k_e(P) \, c_e/k_e(P) = \sum_{e\in E(P)} c_e
\text{.}
\end{equation}

%
%
A strategy profile $O(G)$ that minimizes the social cost of a game $G$ is called
a \emph{social optimum}. 
Observe that a social optimum $O(G)$ so that $E(O(G))$ induces a forest always exists (if there is a cycle,
we could remove one of its edges without increasing the social cost).
%
%
Let $\mathcal{N}(G)$ be the set of Nash equilibria of a game $G$.
The \emph{price of stability of a game} $G$ is the ratio $\text{PoS}(G) =
\min_{N\in \mathcal{N}(G)} \text{cost}(N) / \text{cost}(O(G))$.
%
%

 Let $\mathcal{M}(G)$ be the set of Nash equilibria that are also global minimizers
of the potential function $\Phi$ of the game.
The 
\emph{potential-optimal price of anarchy} of a game $G$, introduced by Kawase
and Makino \cite{popos}, 
is defined as 
$\text{POPoA}(G) = \max_{N\in \mathcal{M}(G)} \text{cost}(N) / cost(O(G))$. 
Properties of potential optimizers were earlier observed and exploited by Asadpour and Saberi in \cite{Asadpour+Saberi/2009} for other games.

Since $\mathcal{M}(G) \subset \mathcal{N}(G)$, 
it follows that $\text{PoS}(G) \leq \text{POPoA}(G)$.
%
Let $\mathcal{G}(n)$ be the set of all Shapley network design games with $n$
players. The \emph{price of stability of Shapley network design games} is
defined as $\text{PoS}(n) = \sup_{G\in \mathcal{G}(n)} \text{PoS}(G)$.
The quantity $\text{POPoA}(n)$ is defined analogously, and we get that
$\text{PoS}(n) \leq \text{POPoA}(n)$.
%


\section{The $\approx H_{n/2}$ upper bound}

The main result of the paper is the new upper bound on the price of stability,
as stated in the following theorem.

\begin{theorem}
  \label{mainthm}
$
    \text{PoS}(n)
 \leq H_{n/2} +
    \epsilon \text{,}
$
  for any $\epsilon > 0$ given that $n\geq n(\epsilon)$ for some
  suitable $n(\epsilon)$.
\end{theorem}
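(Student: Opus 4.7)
The plan is to build on the proof of the $H_n$ bound from \cite{original} by extracting $n$ additional inequalities from the potential-minimizing property of the equilibrium, rather than using only $\Phi(N) \le \Phi(O)$. Throughout, let $O = (O_1,\ldots,O_n)$ denote a social optimum whose edge set $E(O)$ is a forest (we may assume this as noted in the preliminaries), let $N = (N_1,\ldots,N_n) \in \mathcal{M}(G)$ be the potential minimizer, and recall that for every strategy profile $P$ we have $\Phi(N) \le \Phi(P)$ and $\mathrm{cost}(P) \le \Phi(P)$. The target chain of inequalities will be $\mathrm{cost}(N) \le \Phi(N) \le (H_{n/2}+\epsilon)\,\mathrm{cost}(O)$.

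For each player $i \in \{1,\ldots,n\}$ I would construct an auxiliary strategy profile $P^{i}=(P^{i}_1,\ldots,P^{i}_n)$ whose edges lie entirely in $E(O) \cup E(N)$ and whose topology is dictated by where $i$'s terminals sit inside the optimum forest. The guiding principle is that in $P^{i}$ each Nash edge $e \in E(N)$ should be traversed by at most roughly $n/2$ players, while each optimum edge $e \in E(O)$ should be charged a multiplicity no worse than in $O$ itself; the natural way to engineer this is to route player $i$ along her Nash path $N_i$ and to let each other player $j$ choose between $O_j$ and $N_j$ according to which side of the cut defined by removing the optimum path $O_i$ (or a suitably chosen subtree) her terminals lie. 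Because $E(O)$ is a forest, such a bipartition is well defined, and one can arrange that for every Nash edge $e$ the total number of players on it across the $n$ profiles stays below the threshold that yields $H_{n/2}$.

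Next I would add together the $n$ inequalities $\Phi(N) \le \Phi(P^{i})$ together with the standard $\Phi(N) \le \Phi(O)$, yielding
\begin{equation*}
(n+1)\,\Phi(N) \;\le\; \Phi(O) + \sum_{i=1}^{n}\Phi(P^{i}).
\end{equation*}
The potential of $O$ contributes at most $\sum_{e \in E(O)} H_{k_e(O)} c_e \le H_n\,\mathrm{cost}(O)$, while the construction of $P^{i}$ is meant to ensure that $\sum_i \Phi(P^{i}) \le n(H_{n/2}+\delta)\,\mathrm{cost}(O)$ for a small $\delta$. Combining, using $H_n = H_{n/2} + \ln 2 + o(1)$, and dividing by $n+1$ gives $\Phi(N) \le (H_{n/2}+\epsilon)\,\mathrm{cost}(O)$ for $n$ sufficiently large, from which $\mathrm{cost}(N) \le \Phi(N)$ finishes the proof.

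The main obstacle is the construction and analysis of the profiles $P^{i}$: the interaction between $E(N)$ and $E(O)$ can be topologically intricate (paths may share long stretches, be nested inside one another, or be entirely disjoint), and the "cut by $O_i$" idea above has to be replaced by a more delicate rerouting whenever $N_i$ wanders far from $O_i$. I expect a case analysis on the relative position of each player $j$'s optimum path with respect to $O_i$ inside the tree of $E(O)$, together with a charging argument that redistributes the harmonic weights $H_{k_e}$ across players so that each Nash edge is counted against at most $n/2$ players in the aggregate. The $\epsilon$ slack absorbs lower-order terms from $H_n - H_{n/2} - \ln 2 = O(1/n)$ and from the discrepancy of order $O(1)$ introduced by boundary cases where the cut produced by $O_i$ is unbalanced; taking $n \ge n(\epsilon)$ suffices to swallow both.
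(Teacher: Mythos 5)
Your high-level plan (extract $n$ extra inequalities $\Phi(N)\leq\Phi(P^i)$ from auxiliary profiles built out of $E(O)\cup E(N)$ and aggregate them with $\Phi(N)\leq\Phi(O)$) matches the paper's strategy, but the way you propose to close the argument contains a genuine flaw. Your target chain $\mathrm{cost}(N)\leq\Phi(N)\leq(H_{n/2}+\epsilon)\,\mathrm{cost}(O)$ is unprovable: take a single edge of cost $1$ with all $n$ players having both terminals at its endpoints; then $N=O$ is the unique profile, $\mathrm{cost}(O)=1$ and $\Phi(N)=H_n$, so no bound of the form $\Phi(N)\leq(H_{n/2}+\epsilon)\,\mathrm{cost}(O)$ can hold. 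The same example kills the intermediate claim $\sum_i\Phi(P^i)\leq n(H_{n/2}+\delta)\,\mathrm{cost}(O)$: the profiles $P^i$ necessarily contain Nash edges (otherwise they carry no information beyond $\Phi(N)\leq\Phi(O)$), and the cost of those edges, possibly shared by up to $n$ players, cannot be charged to $\mathrm{cost}(O)$ at all. The Nash-edge terms must instead be moved to the left-hand side and cancelled against the copies of $\Phi(N)$, and this cancellation is exactly the delicate point: after summation the coefficient of $\vert N^l\vert$ is $nH_l-lH_n$, which vanishes at $l=n$, so the naive aggregate gives no positive multiple of $\mathrm{cost}(N)$. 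The paper resolves this by adding not one but $x=\frac{n-H_n}{H_n-1}$ weighted copies of $\Phi(N)\leq\Phi(O)$, showing that $\alpha(l)=(n+x)H_l-lH_n$ is minimized at the endpoints with value $n+x-H_n>0$, and bounding $\mathrm{cost}(N)$ directly through these coefficients rather than through $\mathrm{cost}(N)\leq\Phi(N)$.

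A second, related discrepancy: you want the profiles $P^i$ to guarantee that each Nash edge is used by at most roughly $n/2$ players, and you derive the $H_{n/2}$ from that. In the paper's construction the opposite happens: in $S^i$ every player is routed through player $i$'s Nash path $N_i$ (via optimum edges from $s_j$ to $s_i$ and from $t_i$ to $t_j$), so Nash edges of $N_i$ get the full coefficient $H_n$; the factor $n/2$ emerges only on the optimum side, from the fact that the coefficient $\theta(l)=lH_{n-l}+(n-l)H_l$ attached to $\vert O^l\vert$ after summing over $i$ is maximized at $l=n/2$ with value $nH_{n/2}$. Your "cut by $O_i$ and let $j$ pick $O_j$ or $N_j$" construction also introduces Nash edges of all players $j$, whose multiplicities in the aggregate are not controlled by any argument you sketch, whereas the paper only ever uses $N_i$ in $S^i$ precisely so that, summed over $i$, each $N_U$ appears with total weight $\vert U\vert H_n$ and can be handled by the coefficient analysis. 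The case analysis you anticipate (position of terminals in the optimum forest, the $O$-cycle-free rerouting, and the disconnected-forest case) is indeed needed, but without repairing the accounting above the proof does not go through.
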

We consider a Nash equilibrium $N$ that minimizes the potential function $\Phi$.
For each player $i$ we construct a strategy profile $S^i$ as follows.
Every player $j\neq i$, whenever possible (the terminals of players $i$ and $j$
lie in the same connected component of the optimum $O$), uses edges of $E(O(G))$
to reach $s_i$, from there it uses the Nash equilibrium strategy (a path) of
player $i$ to reach $t_i$, and from there it again uses edges of $E(O(G))$ to
reach the player $j$'s other terminal node.
From the definition of $N$, we then obtain the inequality $\Phi(N)\leq
\Phi(S^i)$. 
We then combine these $n$ inequalities in a particular way with the inequality
$\Phi(N)\leq \Phi(O(G))$, and obtain the claimed upper bound on the cost of $N$.

The proof of Theorem \ref{mainthm} is structured in the following way.
We first prove the theorem for the special case where an optimum $O(G)$ contains
an edge that is used by every player.
We then extend the proof of this special case, first to the case where $E(O(G))$
is a tree, but with no edge used by every player, and, second, to the case where
$E(O(G))$ is a general forest (i.e., not one connected component).

We will use the following notation.
For a strategy profile $P=(P_1,\ldots,P_n)$ and a set $U\subset \{1,\ldots,n\}$,
we denote by $P_U$ the set of edges $e\in E$ for which $\{j \vert e\in P_j\} =
U$ and by $P^l$ the set of edges $e\in E$ for which $\vert \{j \vert e\in P_j\}
\vert = l$.
That is, $P_U$ is the set of edges used in $P$ by exactly the players $U$, and
$P^l=\bigcup_{U \subset \{1,\ldots,n\} \atop \vert U \vert = l} P_U$ is the set
of edges used by exactly $l$ many players.
Then the edges used by player $i$ in $P$ are $\bigcup_{U\subset \{1,\ldots,n\}
\atop i\in U} P_U$.
We stress that for every player $i\in U$, the edges of $P_U$ are part of the
strategy $P_i$; this implies that, whenever $E(P)$ induces a forest, the source
$s_i$ and the target $t_i$ are in two different connected components of $E(P)
\setminus P_U$.
For any set of edges $F\subset E$, let $ \vert F  \vert := \sum_{e\in F} c_e$.
We then have, for instance, that the cost of player $i$ in $P$ is given by
$\text{cost}_i(P)=\sum_{U\subset \{1,\ldots,n\} \atop i\in U} \frac{\vert P_U
\vert}{\vert U \vert}$.

From now on, $G$ is an arbitrary Shapley network design game with $n$ players,
$N=(N_1,\ldots,N_n)$ is a Nash equilibrium minimizing the potential function and 
$O=(O_1,\ldots,O_n)$ is an arbitrary social optimum so that $E(O)$ has no cycles.

%

\subsection{Case $O^n$ is not empty}

In this section we assume that $O^n$ is not empty. In this case, $E(O)$ is
actually a tree.
Then, $E(O) \setminus O^n$ is formed by two disconnected trees, which we call
$O^-$ and $O^+$, such that each player has the source node in one tree and the
target node in the other tree (see also Fig.~\ref{pathexample}).
Without loss of generality, assume that all source nodes $s_i$ are in $O^-$.
Given two players $i$ and $j$, let $u_{i,j}$ be the first\footnote{the edges are ordered
  naturally along the path from $s_i$ to $t_i$} edge of $O_i \cap O_j$ and
$v_{i,j}$ be the last edge of $O_i \cap O_j$.
Notice that every edge between $s_i$ and $u_{i,j}$ is used in $O$ by player $i$
but not by player $j$.
That is, each edge $e$ between $s_i$ and $u_{i,j}$ satisfies $e\in O_i$ and
$e\notin O_j$, or equivalently, $e\in \bigcup_{U \subset \{1,\ldots,n\} \atop
i\in U, j\notin U} O_U$.
An analogous statement holds for each edge $e$ between $t_i$ and $v_{i,j}$.

For every player $i$, we define a strategy profile $S^i$, where
%
%
player $j=1,\ldots,n$ uses the following $s_j$-$t_j$ path $S_j^i$ (see Fig.
\ref{pathexample} for an example.):
\begin{enumerate}
  \item From $s_j$ to $u_{i,j}$, it uses edges of $O^-$.  \item From $u_{i,j}$ to $s_i$, it uses edges of $O^-$.  \item From $s_i$ to $t_i$, it uses edges of $N_i$.  \item From $t_i$ to $v_{i,j}$, it uses edges of $O^+$.  \item From $v_{i,j}$ to $t_j$, it uses edges of $O^+$.
\end{enumerate}

Observe that for $j=i$, the path $S_i^i$ is just the path $N_i$ from the Nash
equilibrium $N$. Then, $S^i = ( S^i_1,\ldots,S^i_n )$.

\begin{figure}[t]
  \centering
  \includegraphics[width=0.5\linewidth]{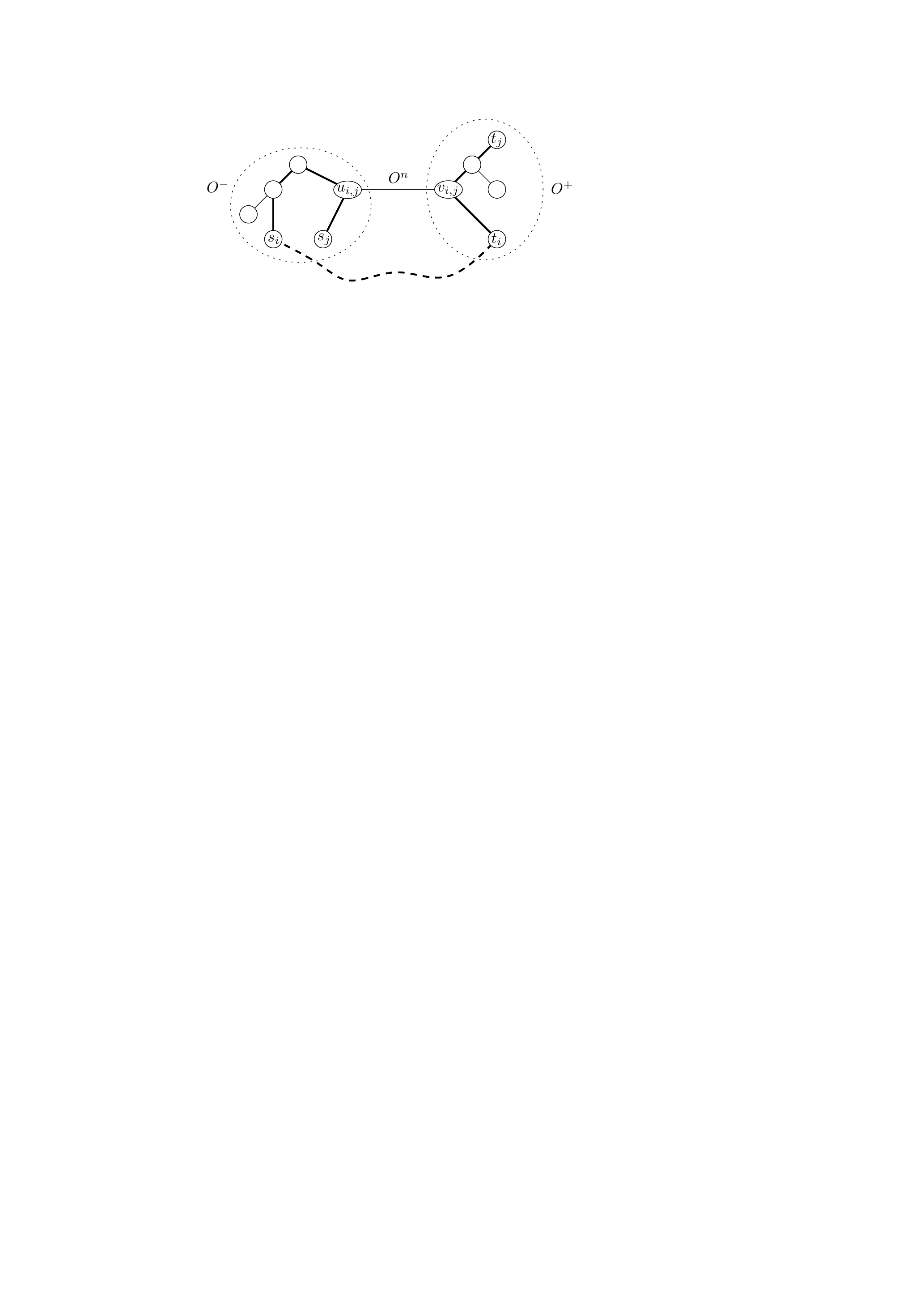}
  \caption{The non dashed lines are the edges of $E(O)$, the dashed line is the
  Nash strategy $N_i$. The path $S_j^i$ from $s_j$ to $t_j$ is given by the
  thicker dashed and non dashed lines.}
  \label{pathexample}
\end{figure}
If $S_j^i$ contains cycles, we skip them to obtain a simple path from $s_j$ to
$t_j$. This can be the case if $N_i$ is not disjoint from $E(O)$, so that an
edge appears both in step $3$ and in one of the steps $1,2,4$ or $5$.
Observe that the path $S_j^i$ uses exactly the edges of $O_U$ for $i\in U,
j\notin U$ (in steps 2 and 4), the edges of $O_U$ for $i\notin U, j\in U$ (in
steps 1 and 5) and the edges of $N_U$ for $i\in U$ (in step 3).
We now can prove the following lemma.

\begin{lemma}
  \label{firstlemma}
  For every $i\in\{1,\ldots,n\}$,
  \begin{equation}
    \label{firstequation}
    \Phi(N)\leq \Phi(S^i) \leq \sum_{U \subset \{1,\ldots,n\} \atop i\in U} H_n
    \vert N_U \vert  + \sum_{U \subset \{1,\ldots,n\} \atop i\in U} H_{n-{\vert
    U \vert}}  \vert O_U \vert + \sum_{U \subset \{1,\ldots,n\} \atop i\notin U}
    H_{\vert U \vert}  \vert O_U \vert \text{.}
  \end{equation}
\end{lemma}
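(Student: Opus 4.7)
The plan is to prove the two inequalities separately. The first, $\Phi(N)\leq \Phi(S^i)$, is essentially free: since $N$ is assumed to be a global minimizer of the potential function and $S^i$ is merely some other strategy profile (each $S^i_j$ is an $s_j$-$t_j$ path, as ensured by construction), the inequality holds by definition of $N$.

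For the second inequality, I would expand $\Phi(S^i)=\sum_{e\in E(S^i)} H_{k_e(S^i)} c_e$ and account for $k_e(S^i)$ edge by edge. Using the structural description just stated in the excerpt — that $S^i_j$ uses edges of $O_U$ with $i\in U,j\notin U$ in steps 2 and 4, edges of $O_U$ with $i\notin U,j\in U$ in steps 1 and 5, and edges of $N_U$ with $i\in U$ in step 3 — I would argue that for any edge $e\in O_U$ with $i\in U$, the only players in $S^i$ that can traverse $e$ are those $j\notin U$ (player $i$ uses $N_i$ only, and any $j\in U\setminus\{i\}$ bypasses the segment between $u_{i,j}$ and $v_{i,j}$ via $N_i$), giving $k_e(S^i)\leq n-|U|$. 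Similarly, for $e\in O_U$ with $i\notin U$, only players $j\in U$ can use $e$, yielding $k_e(S^i)\leq |U|$. For $e\in N_U$ with $i\in U$, one can only say crudely $k_e(S^i)\leq n$. Since $H$ is monotone increasing, these bounds turn into bounds on $H_{k_e(S^i)}$, and summing over the three edge classes produces exactly the three terms on the right-hand side of \eqref{firstequation}.

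I would then note two subtleties that the bound swallows without harm: first, edges lying in both $E(N)$ and $E(O)$ are counted in more than one of the three sums, but this only inflates the upper bound; second, cycle removal in $S^i_j$ can only decrease the multiplicities $k_e(S^i)$, so the upper bounds on $k_e(S^i)$ above remain valid even after simplification. Thus the right-hand side is an honest overestimate of $\Phi(S^i)$.

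The main obstacle is justifying the tight-looking bound $k_e(S^i)\leq n-|U|$ for $e\in O_U$ with $i\in U$: one must confirm that every player $j\in U\setminus\{i\}$ really does skip $e$ in $S^i_j$. This uses the tree structure of $E(O)$ to guarantee that the common part of $O_i$ and $O_j$ forms a contiguous sub-path from $u_{i,j}$ to $v_{i,j}$, so any edge shared by $i$ and $j$ in $O$ lies on that sub-path, and is therefore precisely an edge that $S^i_j$ replaces by traversing $N_i$ in step 3. Once this structural point is in place, the rest of the edge-counting is routine.
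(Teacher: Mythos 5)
Your proof is correct and follows essentially the same route as the paper: the first inequality from potential minimality of $N$, then a per-edge accounting of $k_e(S^i)$ based on the structural fact that $S^i_j$ avoids $O_i\cap O_j$ (the segment between $u_{i,j}$ and $v_{i,j}$), giving $k_e(S^i)\leq n-\vert U\vert$ for $e\in O_U$ with $i\in U$, $k_e(S^i)\leq\vert U\vert$ for $i\notin U$, and the crude bound $n$ on edges of $N_i$. The only point to make explicit is that your bound $k_e(S^i)\leq n-\vert U\vert$ for $e\in O_U$, $i\in U$, is valid only when $e\notin N_i$ (otherwise every player may traverse $e$ via step 3); the paper handles precisely this by restricting the $O$-accounting to edges not in $N_i$ and letting the first sum's coefficient $H_n$ cover the shared edges, which is the rigorous form of your remark that edges lying in both $E(N)$ and $E(O)$ only inflate the right-hand side.
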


\begin{proof}
The first inequality of (\ref{firstequation}) holds because, by assumption, $N$ is
a global minimum of the potential function $\Phi$.

To prove the second inequality, recall that for any strategy profile $P$ we can
write $\Phi(P)=\sum_{e\in P} H_{k_e(P)} c_e=\sum_{U \subset \{1,\ldots,n\}}
H_{\vert U \vert} \vert P_U \vert $.
In our case, every edge $e\in S^i$ belongs either to $N_U$, $U\subset
\{1,\ldots,n\}$, $i \in U$, or to $O_U$, and we therefore sum only over these
terms.
%
%
We now show that, in our sum, the cost $c_e$ of every edge $e$ in $S^i$ is
accounted for with at least coefficient $H_{k_e(S^i)}$.


For the first sum in the right hand side of (\ref{firstequation}), obviously at
most $n$ players can use an edge of $N_U, i\in U$, i.e., $k_e(S^i)\leq n$.
To explain the second and third sums, notice that if an edge $e\in O_U$ that is
present in $S^i$ also belongs to $N_i$, its cost is already accounted for in the
first sum.
So, we just have to look at edges that are only present in steps $1,2,4$ and $5$
of the definition of $S_j^i$.

To explain the second sum, let $i\in U$. Then, as we already noted, in the
definition of $S_j^i$, player $j$ uses edges of $O_U$ with $i\in U$ only if
$j\notin U$ (in steps $2$ and $4$). Since there are exactly $n-\vert U \vert$
players that satisfy $j\notin U$, this explains the second sum.

Finally, to explain the third sum, let $i\notin U$. Similarly to the previous
argument, in the definition of $S_j^i$, player $j$ uses edges of $O_U$ with
$i\notin U$ only if $j\in U$ (in steps $1$ and $5$). Since there are exactly
$\vert U \vert$ players that satisfy $j\in U$, this explains the third sum.

\end{proof}

We now show how to combine Lemma~\ref{firstlemma} with the inequality $\Phi(N)
\leq \Phi(O)$ to prove Theorem \ref{mainthm}, whenever $O^n\neq \emptyset$.


\begin{lemma}
  \label{boundlemma}
  Suppose that Inequality \eqref{firstequation} holds for every $i$. Then,
  for $x=\frac{n - H_n}{H_n - 1}$, 
  \begin{equation*}
    \text{PoS}(G)\leq \frac{n+x}{n+x-H_{n}} H_{\frac{n+x}{2}}\leq H_{n/2} +
    \epsilon
  \end{equation*}
  holds for any $\epsilon > 0$, given that $n\geq n(\epsilon)$ for some
  suitable $n(\epsilon)$.
\end{lemma}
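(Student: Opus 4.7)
My plan is to combine the $n$ inequalities from Lemma~\ref{firstlemma} with $x$ copies of the potential inequality $\Phi(N) \leq \Phi(O) = \sum_U H_{\vert U \vert} \vert O_U \vert$, and then separately lower- and upper-bound the resulting coefficients uniformly in $\vert U \vert$. Summing the $i$-th inequality of Lemma~\ref{firstlemma} over $i$ regroups each $\vert N_U \vert$ with multiplicity $\vert U \vert$, and each $\vert O_U \vert$ with multiplicities $\vert U \vert$ (from the second term) and $n - \vert U \vert$ (from the third). Using $\Phi(N) = \sum_U H_{\vert U \vert} \vert N_U \vert$ on the left and adding the $x$ copies of $\Phi(N) \leq \Phi(O)$, the combination reads
\begin{equation*}
\sum_U \bigl[(n+x) H_{\vert U \vert} - H_n \vert U \vert\bigr]\, \vert N_U \vert \;\leq\; \sum_U \bigl[\vert U \vert\, H_{n - \vert U \vert} + (n - \vert U \vert + x) H_{\vert U \vert}\bigr]\, \vert O_U \vert.
\end{equation*}
If I can exhibit constants $\alpha, \beta$ such that every coefficient on the left is at least $\alpha$ and every coefficient on the right is at most $\beta$, then summing yields $\alpha\, \text{cost}(N) \leq \beta\, \text{cost}(O)$ and hence $\text{PoS}(G) \leq \beta/\alpha$.

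For $\alpha$, I would treat $f(u) := (n+x) H_u - H_n u$ as a function on $[1,n]$. The integral representation of $H$ in the preliminaries makes $H$ strictly concave, so $f$ is concave. The value $x = (n - H_n)/(H_n - 1)$ is exactly what forces $f(1) = f(n) = n + x - H_n$, and a concave function with equal endpoint values on $[1,n]$ is bounded below on the whole interval by that common value. Hence $\alpha = n + x - H_n$.

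For $\beta$, I would apply Jensen's inequality with weights $u/(n+x)$ and $(n - u + x)/(n+x)$ to the arguments $n - u$ and $u$. Concavity of $H$ yields an upper bound of $(n+x)\,H_{\mu(u)}$, where $\mu(u) = u(2n - 2u + x)/(n+x)$ is a concave quadratic in $u$; a one-line maximization shows $\mu(u) \leq (n+x)/2$ whenever $x \geq 0$, which holds for $n \geq 2$. By monotonicity of $H$ this gives $\beta = (n+x) H_{(n+x)/2}$, establishing the first inequality of the lemma.

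For the asymptotic step, the identity $n + x - H_n = H_n(n - H_n)/(H_n - 1)$ collapses the prefactor to $(n-1)/(n - H_n) = 1 + O(H_n/n)$, which tends to $1$. Since $x = \Theta(n/\ln n)$, we have $x/n = O(1/\ln n)$, and the integral formula gives $H_{(n+x)/2} - H_{n/2} = O(\ln(1 + x/n)) = o(1)$; multiplying through yields $H_{n/2} + o(1)$, which proves the second inequality. I expect the main delicacy to be the double role of $x$: it must be large enough to align $f(1)$ with $f(n)$ so the lower bound is tight at both extremes of $\vert U \vert$, yet small enough that $(n+x)/2$ is within $o(1)$ of $n/2$ in $H$-value; the specific choice $(n - H_n)/(H_n - 1)$ is the unique value meeting both constraints simultaneously.
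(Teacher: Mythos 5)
Your proposal is correct and follows the paper's proof essentially step for step: the same summation of the $n$ inequalities of Lemma~\ref{firstlemma}, the same admixture of $x=\frac{n-H_n}{H_n-1}$ copies of $\Phi(N)\leq\Phi(O)$, the same uniform coefficient bounds $\alpha=n+x-H_n$ and $\beta=(n+x)H_{\frac{n+x}{2}}$, and the same asymptotic argument resting on $x\in o(n)$ and $(H_n)^2\in o(n)$. The only (valid) deviation is in a sub-step: you obtain the coefficient bounds from concavity of $H$ (endpoint argument for $\alpha$, Jensen for $\beta$), whereas the paper argues via discrete differences such as $\alpha(l+1)-\alpha(l)=\frac{n+x}{l+1}-H_n$ and the symmetry of $\theta(l)=lH_{n-l}+(n-l)H_l$ around $n/2$.
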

\begin{proof}
We sum \eqref{firstequation} for $i=1,\ldots,n$ to obtain
%
\begin{equation*}
\begin{split}
& n \Phi(N) \leq \sum_{i=1}^{n} \left( \sum_{U \subset \{1,\ldots,n\} \atop i\in U} H_n \vert N_U \vert  + \sum_{U \subset \{1,\ldots,n\} \atop i\in U} H_{n-{\vert U \vert}}  \vert O_U \vert + \sum_{U \subset \{1,\ldots,n\} \atop i\notin U} H_{\vert U \vert}  \vert O_U \vert \right) = \\
& = \sum_{U \subset \{1,\ldots,n\}} \vert U \vert H_n \vert N_U \vert  + \sum_{U \subset \{1,\ldots,n\}} \vert U \vert H_{n-{\vert U \vert}}  \vert O_U \vert + \sum_{U \subset \{1,\ldots,n\}} (n - \vert U \vert) H_{\vert U \vert}  \vert O_U \vert = \\
&  = \sum_{l=1}^{n} l H_n \vert N^l \vert + \sum_{l=1}^{n} (l H_{n-l} + (n-l) H_l) \vert O^l \vert \enspace .
\end{split}
\end{equation*}
Since $\Phi(N)=\sum_{l=1}^{n} H_l \vert N^l \vert$, by putting all terms relating to $N$ on the left hand side we obtain
\begin{equation}
\label{pf1eq1}
\sum_{l=1}^{n} (n H_l- l H_n) \vert N^l \vert \leq \sum_{l=1}^{n} (l H_{n-l} + (n-l) H_l) \vert O^l \vert \enspace .
\end{equation}
On the other hand, we have $\Phi(N) \leq \Phi(O)$, which we can write as
\begin{equation}
\label{pf1eq2}
\sum_{l=1}^{n} H_l \vert N^l \vert \leq \sum_{l=1}^{n} H_l \vert O^l \vert \enspace .
\end{equation}
If we multiply (\ref{pf1eq2}) by $x=\frac{n-H_n}{H_n-1}$ and sum it with (\ref{pf1eq1}) we get
\begin{equation}
\label{pf1eq3}
\sum_{l=1}^{n} ((n+x) H_l- l H_n) \vert N^l \vert \leq \sum_{l=1}^{n} (l H_{n-l} + ((n+x)-l) H_l) \vert O^l \vert \enspace .
\end{equation}
Let $\alpha(l)=(n+x) H_l- l H_n$ and $\beta(l)=l H_{n-l} + ((n+x)-l) H_l$.
We will show that $\min_{l\in \{1,\ldots,n\}} \alpha(l)=n+x-H_{n}$ and that
$\max_{l\in \{1,\ldots,n\}} \beta(l) \leq (n+x) H_{\frac{n+x}{2}}$.
This will allow us to bound the left and right hand side of (\ref{pf1eq3}),
giving us the desired bound on the price of stability.

To prove $\min_{l\in \{1,\ldots,n\}} \alpha(l)=n+x-H_n$, we 
show that $\alpha(l)$ first increases and then decreases and that $\alpha(1)=\alpha(n)$.
We have
\begin{equation*}
\begin{split}
&\alpha(l+1)-\alpha(l) = (n+x) H_{l+1} - (l+1) H_n - ((n+x) H_l- l H_n) = \\
& = (n+x) H_l + \frac{n+x}{l+1} - l H_n - H_n - (n+x) H_l + l H_n =
\frac{n+x}{l+1} - H_n\text{.}
\end{split}
\end{equation*}
The difference is positive when $l+1\leq \frac{n+x}{H_n}$, which proves that $\alpha$ first increases and then decreases and implies that the minimum is at one of the extremes $l=1$ or $l=n$.
Is it easy to check that 
by the choice of $x$ the values at the two extremes coincide, and 
the minimum is $\alpha(1)=n+x-H_n$.

To prove $\max_{l\in \{1,\ldots,n\}} \beta(l) \leq (n+x) H_{\frac{n+x}{2}}$, we
first show that $\theta(l)=l H_{n-l} + (n-l) H_l$ has maximum $n H_{n/2}$.
Since $\theta$ is symmetric around $n/2$, we just have to show that the
difference $\theta(l+1)-\theta(l)$ is always positive for $l+1\leq n/2$.
This proves that $\theta$ reaches at $l=n/2$ the maximum value of $\frac{n}{2}
H_{n/2}+\frac{n}{2} H_{n/2}=n H_{n/2}$.
We have that
\begin{equation*}
  \begin{split}
    &\theta(l+1)-\theta(l) = (l+1) H_{n-(l+1)} + (n-(l+1)) H_{l+1} - (l H_{n-l}
    + (n-l) H_l) = \\
    & = l H_{n-l} + H_{n-l} - \frac{l+1}{n-(l+1)} + (n - l) H_l  - H_l + \frac{n
    - (l+1)}{l+1} - l H_{n-l} - (n-l) H_l = \\
    & = \frac{n - (l+1)}{l+1} - \frac{l+1}{n-(l+1)} + H_{n-l} - H_l \text{.}
  \end{split}
\end{equation*}
The term $\frac{n - (l+1)}{l+1} - \frac{l+1}{n-(l+1)}$ is positive if $n -
(l+1)\geq l+1$, that is if $l+1\leq n/2$.
Since $H$ is an increasing function, $H_{n-l} - H_l$ is positive if $l\leq n/2$,
in particular if $l+1\leq n/2$. This proves our claim that $\theta(l)=l H_{n-l}
+ (n-l) H_l$ has maximum $n H_{n/2}$.

Since $H$ is an increasing function, we then have the bound
\begin{equation*}
\beta(l)=l H_{n-l} + ((n+x)-l) H_l\leq l H_{(n+x)-l} + ((n+x)-l) H_l \leq (n+x) H_{\frac{n+x}{2}} \enspace .
\end{equation*}
We can now finally prove Lemma \ref{boundlemma}. We know that
\begin{equation}
  \label{pf1eq4}
  (n+x-H_{n}) \text{ cost}(N) = (n+x-H_{n}) \sum_{l=1}^{n} \vert N^l \vert \leq
  \sum_{l=1}^{n} ((n+x) H_l- l H_n) \vert N^l \vert \enspace \text{,}
\end{equation}
\begin{equation}
  \label{pf1eq5}
  \sum_{l=1}^{n} (l H_{n-l} + ((n+x)-l) H_l) \vert O^l \vert \leq (n+x)
  H_{\frac{n+x}{2}} \sum_{l=1}^{n} \vert O^l \vert = (n+x) H_{\frac{n+x}{2}}
  \text{ cost}(O) \enspace \text{,}
\end{equation}
which together with (\ref{pf1eq3}) proves that $\text{PoS}(G) \leq
\frac{\text{cost}(N)}{\text{cost}(O)}\leq \frac{n+x}{n+x-H_{n}}
H_{\frac{n+x}{2}}$.

Now observe that for any $\epsilon$ there is an $n(\epsilon)$ so that
$\frac{n+x}{n+x-H_{n}} H_{\frac{n+x}{2}} \leq H_{n/2} + \epsilon$ whenever $n\geq n(\epsilon)$, because
$x=\frac{n-H_n}{H_n-1} \in o(n)$ and $(H_n)^2 \in o(n)$.
%
%

\end{proof}



\subsection{Case $O^n$ is empty}

In the previous section we proved Theorem \ref{mainthm} if $O^n\neq \emptyset$
by constructing for every pair of players $i$ and $j$ a particular path $S^i_j$
that uses edges of $E(O)$ to go from $s_j$ to $s_i$ and from $t_j$ to $t_i$.

If $E(O)$ is not connected, then there is a pair of players $i,j$ for which $s_i$ and $s_j$ are in different connected components of $E(O)$, and we cannot define the path $S_j^i$.
Even if $E(O)$ is connected, but $O^n = \emptyset$, there might be a pair of players $i$ and $j$ for which the path $S^i_j$ exists, but this path is not optimal.
See Fig. \ref{case6} for an example: the path $S_j^i$ (before cycles are removed to make $S_j^i$ a simple path) traverses some edges of $E(O)$ twice, including the edge denoted by $e$ in the figure.
The same holds even if we exchange the labeling of $s_i$ and $t_i$.
Thus, we may need to define a new path $T_j^i$ for some players $i$ and $j$.



To define the new path $T_j^i$, let us introduce some notation.
Given two players $i, j$ and two nodes $x_i\in \{ s_i, t_i \}, x_j \in \{ s_j, t_j \}$ in the same connected component of $E(O)$, let $O(x_i,x_j)$ be the unique path in $E(O)$ between $x_i$ and $x_j$.
If $s_i$ and $s_j$ are in the same connected component of $E(O)$, let $(T_j^i)'$ (respectively $(T_j^i)''$) be the following $s_j$-$t_j$ path:
\begin{enumerate}
\item[$1'.$] From $s_j$ to $s_i$ (respectively $t_i$), it uses edges of $O(s_i,s_j)$ (respectively $O(t_i,s_j)$).
\item[$2'.$] From $s_i$ (respectively $t_i$) to $t_i$ (respectively $s_i$), it uses edges of $N_i$.
\item[$3'.$] From $t_i$ (respectively $s_i$) to $t_j$, it uses edges of $O(t_i,t_j)$ (respectively $O(s_i,t_j)$).
\end{enumerate}


If $(T_j^i)'$ or $(T_j^i)''$ contain cycles, we skip them to obtain a simple path from $s_j$ to $t_j$. 
See Fig. \ref{case1} for an example of $(T_j^i)'$ and Fig. \ref{case3} for an example of $(T_j^i)''$.

Notice that in the previous section, we had $S_j^i=(T_j^i)'$ (where steps $1$ and $2$ are now step $1'$; steps $4$ and $5$ are now step $3'$) and $O(s_i,s_j) \cap O(t_i,t_j) = \emptyset$, since $O(s_i,s_j) \subset O^-$ and $O(t_i,t_j) \subset O^+$.
This ensured that there was no edge that is traversed both in step $1'$ and $3'$, which would make Lemma \ref{firstlemma} not hold.
In general, $O(s_i,s_j) \cap O(t_i,t_j) = \emptyset$ does not have to hold; for example in Fig. \ref{case6} we have $e \in O(s_i,s_j) \cap O(t_i,t_j)$.
We call the path $(T_j^i)'$ (respectively $(T_j^i)''$) \emph{$O$-cycle free} if $O(s_i,s_j) \cap O(t_i,t_j) = \emptyset$ (respectively if $O(s_i,t_j) \cap O(t_i,s_j) = \emptyset$).
For instance, in Fig. \ref{case6} both $(T_j^i)'$ and $(T_j^i)''$ are not \emph{$O$-cycle free}.



We are now ready to define the path $T_j^i$ for two players $i$ and $j$.
If $s_i$ and $s_j$ are in the same connected component of $E(O)$, we set $T_j^i = (T_j^i)'$ (respectively $T_j^i= (T_j^i)''$) if $(T_j^i)'$ (respectively $(T_j^i)''$) is \emph{$O$-cycle free}.
Otherwise, we set $T_j^i=O_j$.
Similar to the previous section, let $T^i=(T_1^i,\ldots,T_n^i)$.
That is, in $T^i$ a player $j$ uses the optimal path $O_j$ if the paths $(T_j^i)'$ and $(T_j^i)''$ are not defined (meaning that $s_i$ and $s_j$ are in different connected components of $E(O)$), or if they are not \emph{$O$-cycle free} (meaning that they use some edges of $E(O)$ twice).
Otherwise, player $j$ uses the \emph{$O$-cycle free} path.

The following lemma shows that the paths $T^i$ satisfy the requirements of Lemma \ref{boundlemma} if $E(O)$ is connected but $O^n = \emptyset$.
A subsequent lemma will then show that the requirements of Lemma \ref{boundlemma} are satisfied even if $E(O)$ is not connected.

\begin{lemma}
\label{secondlemma}
If $E(O)$ is connected, then for every $i\in\{1,\ldots,n\}$
\begin{equation}
\label{secondequation}
\Phi(N)\leq \Phi(T^i) \leq \sum_{U \subset \{1,\ldots,n\} \atop i\in U} H_{n} \vert N_U \vert  + \sum_{U \subset \{1,\ldots,n\} \atop i\in U} H_{o_i(U)}  \vert O_U \vert + \sum_{U \subset \{1,\ldots,n\} \atop i\notin U} H_{\vert U \vert}  \vert O_U \vert \enspace ,
\end{equation}
with $o_i(U)\leq n-{\vert U \vert}$.
\end{lemma}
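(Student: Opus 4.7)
The first inequality $\Phi(N)\le\Phi(T^i)$ holds because $N$ is a global minimizer of $\Phi$. For the second, my plan is to expand $\Phi(T^i)=\sum_{U\subset\{1,\ldots,n\}}H_{\vert U\vert}\vert T^i_U\vert$ and, for each edge $e\in E(T^i)$, bound its multiplicity $k_e(T^i)$ by absorbing it into one of the three sums on the right-hand side. Each $e\in E(T^i)$ lies in $N_i$, in $E(O)$, or in both; the first sum (with coefficient $H_n$) covers $N_i$-edges, the third sum covers edges $e\in O_U$ with $i\notin U$, and the second sum --- where the new bound $o_i(U)\le n-\vert U\vert$ has to be established --- covers edges $e\in O_U$ with $i\in U$.

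For the third sum, fix $e\in O_U$ with $i\notin U$. Since $E(O)$ is a tree (connected and acyclic by assumption), removing $e$ yields two subtrees $V_1^e\ni s_i,t_i$ and $V_2^e$. Every $j\in U$ has its terminals on opposite sides of $e$, so $j$ uses $e$ exactly once in whichever of $(T_j^i)'$, $(T_j^i)''$ or $O_j$ is chosen as $T^i_j$. For $j\notin U\cup\{i\}$, a short case split shows that $j$ does not use $e$: if $s_j,t_j\in V_1^e$ then both Nash-bridged paths route around $e$; if $s_j,t_j\in V_2^e$ then both Nash-bridged paths traverse $e$ twice and so fail $O$-cycle-freeness, forcing $T^i_j=O_j$, which does not contain $e$ because $j\notin U$. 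Hence $k_e(T^i)\le\vert U\vert$ and the coefficient $H_{\vert U\vert}$ suffices (the corner case $e\in N_i$ is absorbed by the first sum).

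The crucial step is the second sum. Fix $e\in O_U$ with $i\in U$, and again partition $E(O)$ by removing $e$ into $V_1\ni s_i$ and $V_2\ni t_i$. For every $j\in U\setminus\{i\}$ the endpoints $s_j,t_j$ lie on opposite sides of $e$, either \emph{aligned} with $i$ ($s_j\in V_1$, $t_j\in V_2$) or \emph{swapped} ($s_j\in V_2$, $t_j\in V_1$). In the aligned case $O(s_i,s_j)$ lives entirely in the $V_1$-subtree and $O(t_i,t_j)$ entirely in the $V_2$-subtree, so these two paths are globally edge-disjoint; this forces $(T_j^i)'$ to be $O$-cycle free, hence $T^i_j=(T_j^i)'$, and $e$ is not on the $O$-portion of this path. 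The swapped case is symmetric via $(T_j^i)''$. Thus no player in $U\setminus\{i\}$ is charged to $e$ through the second sum, and only the $n-\vert U\vert$ players outside $U$ can contribute, giving $o_i(U)\le n-\vert U\vert$.

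The main obstacle, such as it is, lies in two observations that must be handled carefully. First, the edge-disjointness of the cut subtrees is what elevates the local fact ``$(T_j^i)'$ avoids $e$'' to the global definition of $O$-cycle-freeness, allowing the correct choice of $T^i_j$. Second, an edge $e\in N_i$ that also belongs to some $O_U$ with $i\in U$ is double-counted on the right-hand side (once via $H_n$ in the first sum, once via $H_{o_i(U)}$ in the second); this is benign since $H_n$ alone dominates $H_{k_e(T^i)}$, but it is the reason why $(T_j^i)'$ traversing $e$ through its $N_i$ segment does not spoil the bound on $o_i(U)$. Once these two observations are made precise, the rest of the accounting parallels the proof of Lemma~\ref{firstlemma} almost verbatim.
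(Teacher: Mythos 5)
Your proposal is correct and follows essentially the same route as the paper: the first inequality from potential minimality, then per-edge accounting via the two components of $E(O)\setminus\{e\}$, showing that for $e\in O_U$ with $i\in U$ the aligned/swapped player $j\in U$ is forced onto the $O$-cycle free path avoiding $e$ (so only players outside $U$ can contribute, giving $o_i(U)\leq n-\vert U\vert$), and for $i\notin U$ only players in $U$ can traverse $e$, with $N_i$-edges absorbed by the $H_n$ term. Your treatment of the aligned/swapped dichotomy and of the double-counted $N_i$ edges is, if anything, slightly more explicit than the paper's figure-based case analysis.
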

\begin{proof}
Since the initial part of the proof is exactly the same as the proof of Lemma \ref{firstlemma}, we only prove that the cost $c_e$ of every edge $e$ in $T^i$ is accounted for with at least coefficient $H_{k_e(T^i)}$ in the right hand side of (\ref{secondequation}).
In particular, we just look at edges that are only present in steps $1'$ and $3'$ of the definition of $T_j^i$, since an edge $e\in O_U$ that also belongs to $N_i$ has its cost already accounted for in the first sum.

To explain the second and third sum, let $U\subset \{1,\ldots,n\}$ and $e \in O_U$.
We will look at all the possibilities of where the nodes $s_i,s_j,t_i$ and $t_j$ can be in the tree $E(O)$ and see whether $e$ can be traversed in the path $T_j^i$.
Denote by $e^-$ and $e^+$ the two distinct connected components of $E(O) \setminus \{e\}$.
Then, by the definition of $O_U$, each player $k\in U$ has $s_k\in e^-$ and $t_k \in e^+$, or viceversa.
Always by the definition of $O_U$, each player $k\notin U$ has either $s_k,t_k \in e^-$ or $s_k,t_k \in e^+$.

To explain the third sum of (\ref{secondequation}), let $i\notin U$.
For illustration purposes, assume without loss of generality that $s_i,t_i \in e^-$.
Then, the only possibilities are that
\begin{itemize}
\item $j\in U$. Then $e$ can be traversed, since $T_j^i$ has to go from $e^-$ to $e^+$ to connect $s_j$ and $t_j$. See Fig. \ref{case1} for an illustration in the case $T_j^i \neq O_j$ and Fig. \ref{case2} for the case $T_j^i=O_j$.
\item $j\notin U, s_j,t_j \in e^-$. Then $e$ cannot be traversed, since all terminal nodes are in $e^-$ and there is no need to traverse $e$. See Fig. \ref{case3} for an illustration in the case $T_j^i \neq O_j$ and Fig. \ref{case5} for the case $T_j^i=O_j$.
\item $j\notin U, s_j,t_j \in e^+$. Then $e$ cannot be traversed, since both $(T_j^i)'$ and $(T_j^i)''$ 
 traverse $e$ twice, so we must have $T_j^i = O_j$. See Fig. \ref{case6} for an illustration.
\end{itemize}
As we can see, $e$ can be traversed only if $j\in U$, that is, at most $\vert U \vert$ times.
This explains the third sum of (\ref{secondequation}).
\begin{figure}[p!]
\centering
\begin{minipage}[h]{.48\textwidth}
\centering
\includegraphics[width=\linewidth]{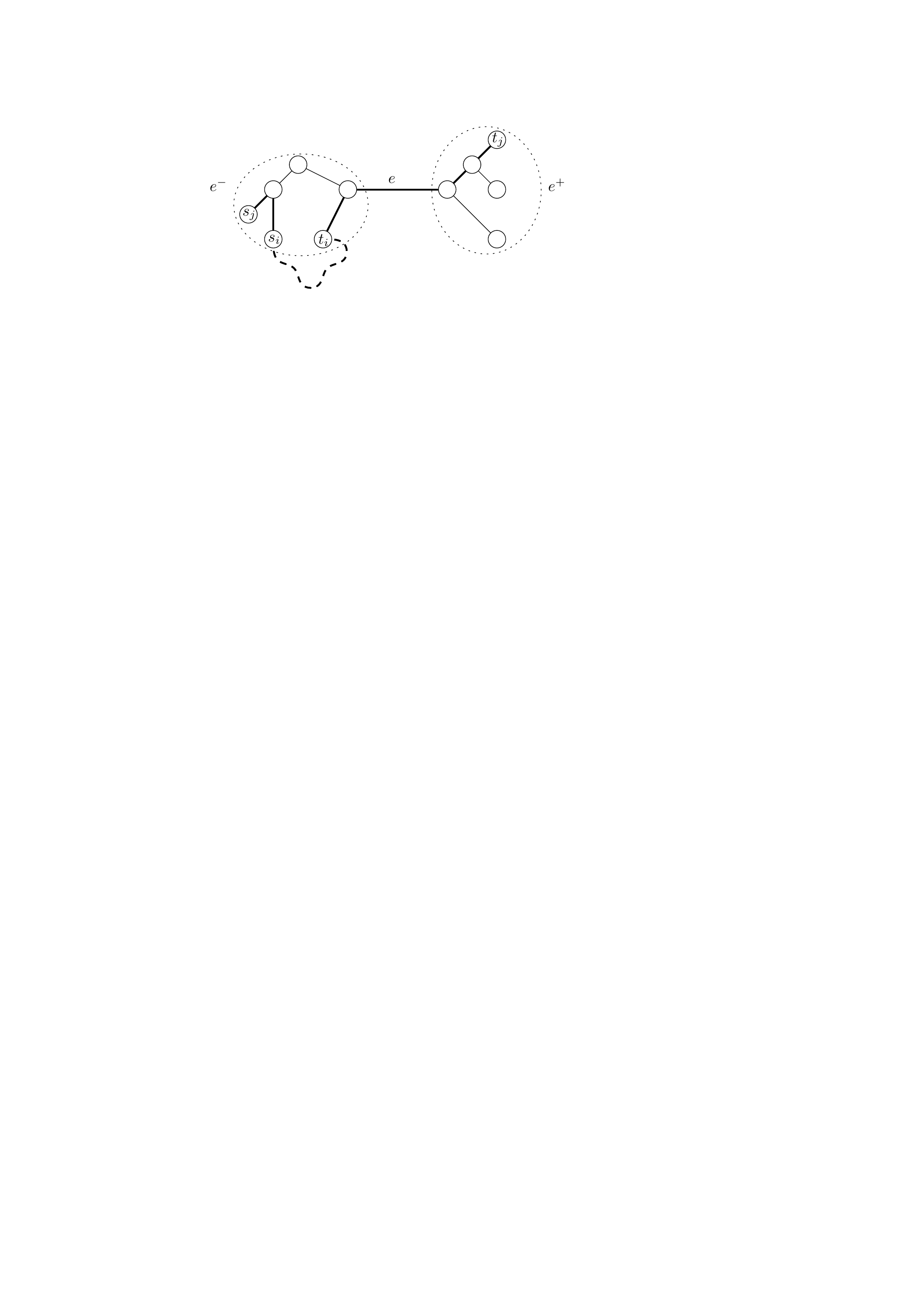}
\caption{$i\notin U, j\in U$ and $T_j^i \neq O_j$. Then $e$ can be traversed in the path $T_j^i$.}
\label{case1}
\end{minipage}%
\hfill
\begin{minipage}[h]{.48\textwidth}
\centering
\includegraphics[width=\linewidth]{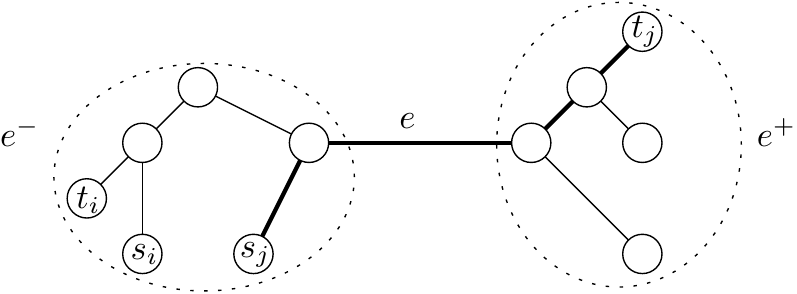}
\caption{$i\notin U, j\in U$ and $T_j^i = O_j$. Then $e$ can be traversed in the path $T_j^i$.}
\label{case2}
\end{minipage}
\end{figure}
\begin{figure}
\centering
\begin{minipage}[h]{.48\textwidth}
\centering
\includegraphics[width=\linewidth]{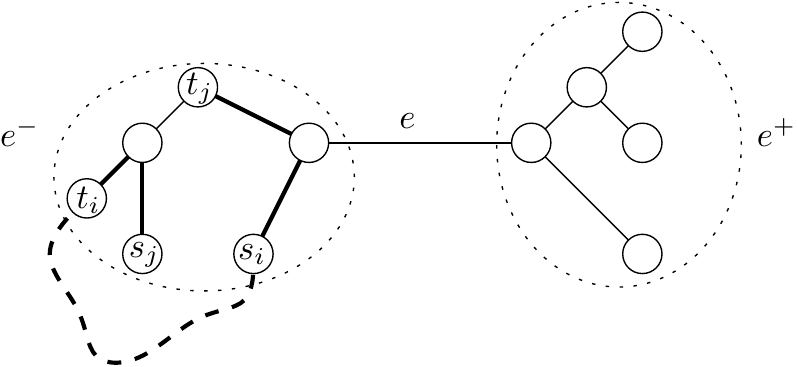}
\caption{$i\notin U, j\notin U$, $s_j,t_j \in e^-$ and $T_j^i \neq O_j$ . Then $e$ cannot be traversed in the path $T_j^i$.}
\label{case3}
\end{minipage}%
\hfill
\begin{minipage}[h]{.48\textwidth}
\centering
\includegraphics[width=\linewidth]{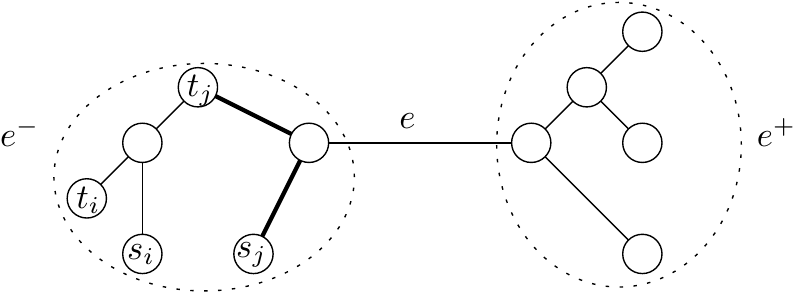}
\caption{$i\notin U, j\notin U$, $s_j,t_j \in e^-$ and $T_j^i = O_j$ . Then $e$ cannot be traversed in the path $T_j^i$.}
\label{case5}
\end{minipage}
\end{figure}
\begin{figure}
\centering
\begin{minipage}[h]{.48\textwidth}
\centering
\includegraphics[width=\linewidth]{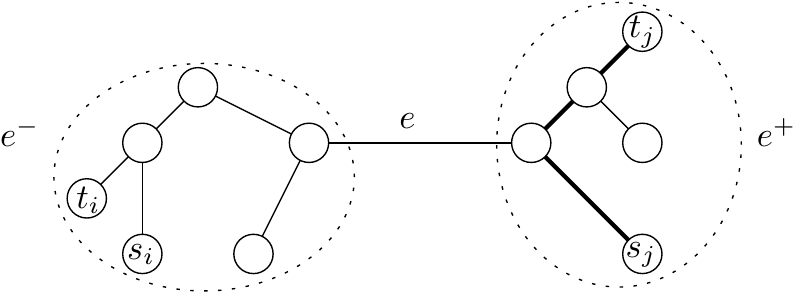}
\caption{$i\notin U, j\notin U$, $s_j,t_j \in e^+$ and $T_j^i = O_j$ . Then $e$ cannot be traversed in the path $T_j^i$.}
\label{case6}
\end{minipage}%
\hfill
\begin{minipage}[h]{.48\textwidth}
\centering
\includegraphics[width=\linewidth]{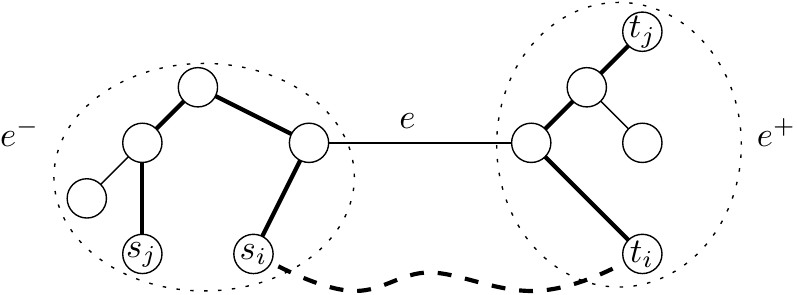}
\caption{$i\in U, j\in U$ and $T_j^i \neq O_j$. Then $e$ cannot be traversed in the path $T_j^i$.}
\label{case21}
\end{minipage}
\end{figure}
\begin{figure}
\centering
\begin{minipage}[h]{.48\textwidth}
\centering
\includegraphics[width=\linewidth]{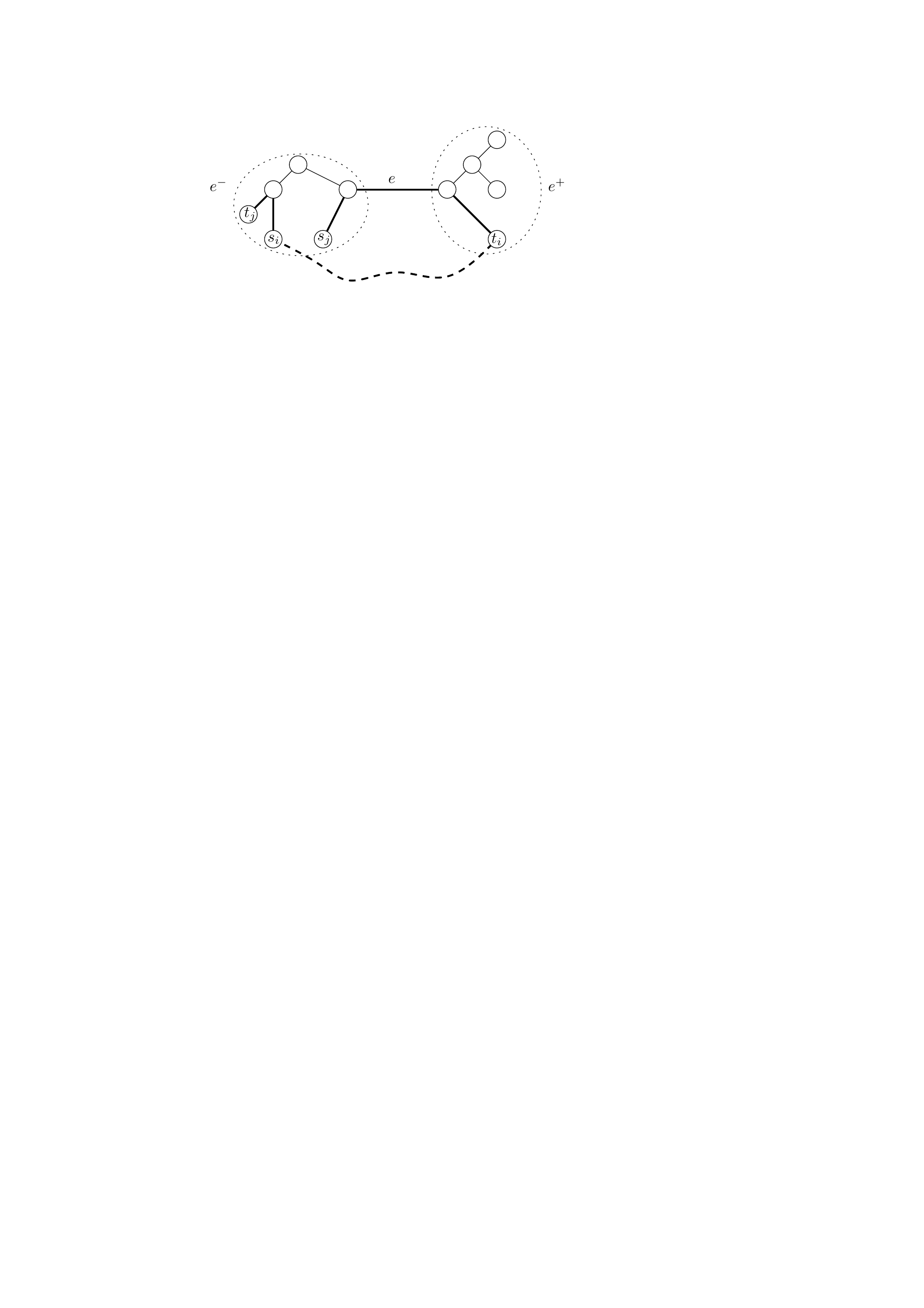}
\caption{$i\in U, j\notin U$, and $T_j^i \neq O_j$. Then $e$ can be traversed in the path $T_j^i$.}
\label{case23}
\end{minipage}%
\hfill
\begin{minipage}[h]{.48\textwidth}
\centering
\includegraphics[width=\linewidth]{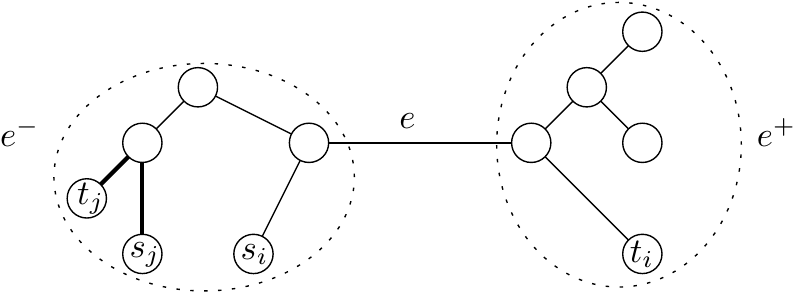}
\caption{$i\in U, j\notin U$, and $T_j^i = O_j$. Then $e$ cannot be traversed in the path $T_j^i$.}
\label{case24}
\end{minipage}
\end{figure}

Finally, to explain the second sum of (\ref{secondequation}), let $i\in U$.
The only possibilities are
that
%
\begin{itemize}
\item $j\in U$. Then $e$ cannot be traversed, since at least one of $(T_j^i)'$ or $(T_j^i)''$ is a \emph{$O$-cycle free} path that does not traverse $e$. See Fig. \ref{case21} for an illustration.
\item $j\notin U$ and $T_j^i \neq O_j$. Then $e$ can be traversed, since $s_j$ and $t_j$ are in the same connected component of $E(O)\setminus \{e\}$, but $s_i$ and $t_i$ are in different ones. See Fig. \ref{case23} for an illustration.
\item $j\notin U$ and $T_j^i = O_j$. Then $e$ cannot be traversed, since $s_j$ and $t_j$ are in the same connected component of $E(O)\setminus \{e\}$ and we just take the direct path between them, which does not traverse $e$. See Fig. \ref{case24} for an illustration.
\end{itemize}
Let $o_i(U)$ be the number of $j\notin U$ with $T_j^i \neq O_j$. 
Then, as we can see, $e$ is traversed at most $o_i(U)\leq n-\vert U \vert$ times.
This explain the second sum of (\ref{secondequation}) and finishes the proof of Lemma \ref{secondlemma}.

\end{proof}

Theorem \ref{mainthm} follows directly if $E(O)$ is connected but $O^n$ is empty
by Lemma \ref{secondlemma} and Lemma \ref{boundlemma}.
The following lemma handles the last case we have left to analyze, which is when
$E(O)$ is not a connected tree. This, together with Lemma~\ref{boundlemma},
finishes the proof of Theorem~\ref{mainthm}.

\begin{lemma}
\label{thirdlemma}
Let $E(O)=C_1 \sqcup \cdots \sqcup C_q$, with each $C_m$ being a connected component of $E(O)$.
Let $R_m$ be the set of players $j$ with $s_j,t_j \in C_m$. Then for a player $i\in R_k$ 
\begin{equation}
\label{thirdequation}
\Phi(N) \leq \Phi(T^i) \leq \sum_{U \subset \{1,\ldots,n\} \atop i\in U} H_n \vert N_U \vert + \sum_{U \subset R_k \atop i\in U} H_{o_i(U)} \vert O_U \vert + \sum_{\substack{U \subset R_m \text{ for some } m \\
 i\notin U}} H_{\vert U \vert} \vert O_U \vert \enspace ,
\end{equation}
with $o_i(U)\leq \vert T_k \vert -\vert U \vert \leq n - \vert U \vert$.
\end{lemma}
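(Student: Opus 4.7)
The plan is to follow the template of Lemma~\ref{secondlemma} but to apply its case analysis componentwise in the forest $E(O)=C_1\sqcup\cdots\sqcup C_q$. The first inequality $\Phi(N)\leq\Phi(T^i)$ is again immediate because $N$ globally minimizes $\Phi$. For the second inequality, I would expand $\Phi(T^i)=\sum_{e\in T^i} H_{k_e(T^i)}\,c_e$ and bound $k_e(T^i)$ for every edge; edges appearing in $N_i$ are absorbed by the first sum using the trivial bound $H_{k_e(T^i)}\leq H_n$, so the real work is to handle edges $e\in O_U$ that do not also lie in $N_i$.

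Two structural observations drive the analysis. First, because $E(O)$ is a forest and each $O_j$ is a path in it, every player $j$ lies in a unique $R_m$; consequently for any edge $e$ used in $O$ the whole set $U$ of its users lies inside the single $R_m$ indexed by the component $C_m$ that contains $e$. Second, by the very definition of $T^i$, a player $j\notin R_k$ uses $T_j^i=O_j$, a path living entirely inside its own $C_{m'}$; such a player therefore cannot traverse any edge in a component different from $C_{m'}$.

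Three cases then cover every $e\in O_U\subset C_m$. If $i\in U$ then $i\in R_k\cap U$ forces $U\subset R_k$ and $m=k$; the second observation rules out every player outside $R_k$, and Lemma~\ref{secondlemma}'s analysis, applied inside the subgame on $R_k$, gives $k_e(T^i)\leq o_i(U)\leq |R_k|-|U|\leq n-|U|$, which is exactly the coefficient of the second sum. If $i\notin U$ and $U\subset R_k$, the same local analysis yields $k_e(T^i)\leq |U|$. Finally, if $i\notin U$ and $U\subset R_m$ with $m\neq k$, then each $j\in U$ traverses $e$ via $T_j^i=O_j$, each $j\in R_m\setminus U$ uses $O_j$ which avoids $e$ by the definition of $O_U$, and every $j\notin R_m$ stays out of $C_m$; again $k_e(T^i)\leq |U|$. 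Both the second and third groupings on the right-hand side of \eqref{thirdequation} are therefore justified.

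The one subtle point, which I expect to be the main obstacle, is to confirm that the $(T_j^i)'/(T_j^i)''$ construction used for $j\in R_k\setminus\{i\}$ cannot leak into another component $C_m$. This reduces to the fact that its $O$-ingredients $O(s_i,s_j),O(t_i,t_j),O(s_i,t_j),O(t_i,s_j)$ are all paths inside $C_k$ because their endpoints lie in $C_k$, while its $N_i$-ingredient is already charged to the first sum. With that verified, the componentwise bounds assemble into \eqref{thirdequation}, and Lemma~\ref{boundlemma} then yields Theorem~\ref{mainthm} in the remaining disconnected case.
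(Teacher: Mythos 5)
Your proposal is correct and follows essentially the same route as the paper's own proof: reduce to edges of $O_U$ not lying in $N_i$, observe that $O_U=\emptyset$ unless $U\subset R_m$ for the component containing the edge, re-run the case analysis of Lemma~\ref{secondlemma} inside $C_k$ for players of $R_k$ (noting players $j\notin R_k$ stick to $O_j$ within their own component), and treat $U\subset R_m$, $m\neq k$ directly, yielding exactly the coefficients $H_{o_i(U)}$ with $o_i(U)\leq \vert R_k\vert-\vert U\vert$ and $H_{\vert U\vert}$ in \eqref{thirdequation}. Your closing verification that the $O$-portions of $(T_j^i)'$ and $(T_j^i)''$ stay inside $C_k$ is the same observation the paper uses implicitly, so there is no gap.
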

\begin{proof}
Since the initial part of the proof is exactly the same as the proof of Lemma \ref{firstlemma} and Lemma \ref{secondlemma}, we only prove that the cost $c_e$ of every edge $e$ in $T^i$ is accounted for with at least coefficient $H_{k_e(T^i)}$ in the right hand side of (\ref{thirdequation}).
In particular, we just look at edges that are only present in steps $1'$ and $3'$ of the definition of $T_j^i$, since an edge $e\in O_U$ that also belongs to $N_i$ has its cost already accounted for in the first sum.

To explain the second and third sum, let $U\subset \{1,\ldots,n\}$ and $e\in O_U$.
Notice that if $U\not\subset R_m$ for every $m$, then $O_U$ is the empty set and $e$ does not contribute anything to $\Phi(T^i)$.
We begin by looking at the second sum.

Notice that since $i\in R_k$, the only possibility to have $i\in U$ is that $U \subset R_k$.
By the definition of $T^i$ the players $j\in R_m$, $m\neq k$ use the path $O_j$, which does not traverse $e$.
With the exact same reasoning of Lemma \ref{secondlemma}, by looking at all the possibilities of where $s_i,t_i,s_j$ and $t_j$ can be in $C_k$, we can see that $e$ can be traversed by player $j\in R_k$ only if $j\notin U$ and $T^i \neq O_j$.
If we then define the number of players $j\in T_k$ with this property to be $o_i(U) \leq \vert T_k \vert -\vert U \vert \leq n - \vert U \vert$, the second sum in the right hand side of (\ref{thirdequation}) is explained.

Finally, for the third sum, we fix $i\notin U$ and look at the cases $U\subset R_k$ and $U\subset R_m$, $m\neq k$ separately.

Suppose first that $U\subset R_k$.
By the definition of $T^i$ the players $j\in R_m$, $m\neq k$ use the path $O_j$, which does not traverse $e$.
With the exact same reasoning of Lemma \ref{secondlemma}, by looking at all the possibilities of where $s_i,t_i,s_j$ and $t_j$ can be in $C_k$, we can see that $e$ can be traversed by player $j\in R_k$ only if $j\in U$.
That is, by at most $\vert U \vert$ players.
This explains the third sum for the case $U\subset R_k$.

We now look at the case $U\subset R_m$, $m\neq k$.
By the definition of $T^i$, players $j\in R_l$, $l\neq m$ do not traverse $e$, since they only use edges of $C_l$ (if $l\neq k$) or edges of $C_k$ and of $N_i$ (if $l=k$).
Players $j\in R_m$ use the path $O_j$, and by the definition of $O_U$ exactly $\vert U \vert$ players traverse $e$.
This explains the third sum for the case $U\subset R_m$, $m\neq k$, which finishes the proof.

\end{proof}
%
%
%


\section{Conclusion}

In this paper we improved the upper bound on price of stability of undirected 
network design games by analyzing potential minima and their properties. 
We hope that similar analysis can be applied to multicast games to obtain much
better asymptotic for the upper bound. 
It is known that bounding the cost of
potential minima cannot provide an upper bound on the price of stability better than
$\Theta(\sqrt{\log\log n})$ \cite{popos}. 
It remains an open question, whether
$\Theta(\sqrt{\log \log n})$ can actually be achieved.

\vspace{0.5cm}

\noindent\textbf{Acknowledgements.} We are grateful to Rati Gelashvili for
valuable discussions and remarks. This work has been partially supported by the
Swiss National Science Foundation (SNF) under the grant number
200021\_143323/1.

\bibliographystyle{plain}
\bibliography{conference_paper.bbl}

\end{document}